\DeclareTextFontCommand{\textsl}{\fontfamily{ppl}\fontshape{sl}\selectfont}
\title[Complexity of cops and robbers]{On the computational complexity of\\a game of cops and robbers}
\author[M.~Mamino]{\lsstyle Marcello Mamino}
\address{\textsc{cmaf} Universidade de Lisboa\\
Av. Prof. Gama Pinto 2\\
1649-003 Lisboa, Portugal}
\email{\mailto{mamino@ptmat.fc.ul.pt}}
\thanks{Partially supported by Funda\c{c}\~ao para a Ci\^encia e a Tecnologia grant SFRH/BPD/73859/2010}
\date{\mydate{27}{ii}{2012}, \textit{revised}: \mydate{24}{xi}{2012}}
\subjclass[2000]{\textsc{91a24, 91a43, 68q17}}
\keywords{computational complexity, cops and robber, pursuit game}
\def\Hy@Warning#1{}
\def\@setemails{%
\ifnum\theg@author > 1 
\mbox{{\itshape E-mail addresses}:\space}{\ttfamily\emails}. 
\else 
\mbox{{\itshape E-mail address}:\space}{\ttfamily\emails}. 
\fi%
}
\def\ps@plain{\ps@empty
  \def\@oddfoot{\normalfont\normalsize \hfil\thepage\hfil}%
  \let\@evenfoot\@oddfoot}
\def\ps@firstpage{\ps@plain
  \def\@oddfoot{\normalfont\normalsize \hfil\thepage\hfil
     \global\topskip\normaltopskip}%
  \let\@evenfoot\@oddfoot
  \def\@oddhead{\@serieslogo\hss}%
  \let\@evenhead\@oddhead 
}
\def\ps@headings{\ps@empty
  \def\@evenhead{%
    \setTrue{runhead}%
    \normalfont\normalsize
    \rlap{\thepage}\hfil
    \textsc{\lsstyle\MakeLowercase{\shortauthors}\hfil}}%
  \def\@oddhead{%
    \setTrue{runhead}%
    \normalfont\normalsize \hfil
    \textsc{\lsstyle\MakeLowercase{\rightmark{}{}}}\hfil\llap{\thepage}}%
  \let\@mkboth\markboth
}
\def\mathshift{$}
\def\myshift#1${\raisebox{0ex}[0ex][0ex]{\mathshift#1\mathshift}}
\let\oldsection\section
\def\newsection#1{\oldsection{\lsstyle #1}}
\def\newsectionr#1{\oldsection*{\lsstyle #1}}
\def\section{\@ifstar\newsectionr\newsection}
\newtheorem{ghost@theorem}{}[section]
\def\@maketheorem#1=#2;{
	\newtheorem{#1}[ghost@theorem]{#2}}
\def\maketheorem#1{
	\@for\@x:=#1\do{
		\expandafter\@maketheorem\@x;}}
\let\oldexists\exists\def\exists{\oldexists\mkern 1mu}
\let\n\oldstylenums
\def\mydate#1#2#3{\hbox{\n{#1}$\cdot${\sc#2}$\cdot$\n{#3}}}
\def\-{\nobreakdash-\hspace{0pt}}
\def\U-{\raise0.2ex\hbox{-}}
\def\url#1{\href{#1}{url\nobreakdash---\texttt{#1}}}
\def\mailto#1{\href{mailto:#1}{\texttt{#1}}}
\def\margin#1{\hbox to 0pt{\hss#1\ }}
\def\eatspace#1{#1}
\def\myitem#1{%
\hfil\break\margin{\textsc{\hbox to 1em{\hss #1\hss}-}}%
\def\@currentlabel{\textsc{#1}}\eatspace}
\theoremstyle{plain}
\theoremstyle{definition}
\theoremstyle{remark}
\theoremstyle:=definition,remark,plain\do{%
\expandafter\g@addto@macro\csname th@\theoremstyle\endcsname{%
\setlength\thm@preskip\parskip%
\setlength\thm@postskip{0pt}%
}}
\renewenvironment{proof}[1][\proofname]{\par
  \pushQED{\qed}%
  \normalfont \topsep=0pt
  \trivlist
  \item[\hskip\labelsep
        \itshape
    #1\@addpunct{.}]\ignorespaces
}{%
  \popQED\endtrivlist\@endpefalse
}
\def\section{\@startsection{section}{1}%
  \z@\z@{1sp}%
  {\normalfont\scshape\centering}}
\edef\orig@output{\the\output}
\begin{document}

\begin{abstract}
We study the computational complexity of a perfect-information two-player
game proposed by Aigner and Fromme in~\cite{AiFro84}. The game takes place
on an undirected graph where $n$ simultaneously moving cops attempt to
capture a single robber, all moving at the same speed. The players are
allowed to pick their starting positions at the first move.  The question
of the computational complexity of deciding this game was raised in
the~'90s by~Goldstein and Reingold~\cite{GoRe95}. We prove that the game
is hard for~{\rm PSPACE}.
\end{abstract}

\maketitle

\section{Introduction}

\def\CR{\textsc{C\&R}}
We consider a two-player perfect-information game, \textsc{Cops and
Robber} (\CR), in which a given number~$n$ of cops attempts to capture a
single robber by moving over the edges of an undirected graph. At the
first move, the player controlling the cops chooses starting vertices for
the $n$ cops, then the robber's player chooses his vertex. After that, the
cops and the robber are moved in alternating turns from one vertex to an
adjacent vertex, all the cops move simultaneously. The game ends when one
of the cops captures the robber. We prove in Theorem~\ref{th-main} that
the problem of determining which player has a winning strategy in~\CR\ is
hard for~{\rm PSPACE}.

\CR\ has been first considered by~Aigner and~Fromme in~\cite{AiFro84}
generalizing a game that was proposed by~Nowakowski and~Winkler
in~\cite{NoWi83} and independently by~Quilliot in~\cite{Quil83}. Since
then, this game has been the object of intense study from the
combinatorial point of view: for a survey see~\cite{FoThi08}, an
up-to-date account of the state of the art can be found in the recent book
of Bonato and Nowakowski~\cite{Bonato-Nowakowski}. From the
point of view of computational complexity, the first study appears
in~\cite{GoRe95}, where~Goldstein and Reingold prove that versions of the
game played on directed graphs or starting from a fixed initial position
are complete for~{\rm EXPTIME}. In the same paper, the unrestricted game
is conjectured to be complete for~{\rm EXPTIME}.

Although the complexity of many similar games has been determined (the
reader is referred again to~\cite{FoThi08} for bibliography), the only
lower bound for the game with elective initial positions on an undirected
graph appeared only recently (for some positive algorithmic results
see~\cite{BeInt93,GoRe95,HaMa06}).  Namely, in~\cite{Foetal10} it is shown that
the problem of determining whether the cops have a winning strategy in a
given instance of~\CR\ is hard for~{\rm NP}.
The difficulty in proving any complexity result for~\CR\ lies in the
extremely dynamic nature of the game.
For example, the reader may observe that the
cops can not really make any mistake (on a connected graph), since from whatever
position they reach, they can go back to the initial position and restart the game.
Arguably, this makes it very hard to force the cops to make any given move.
As expected, the complexity of the game
is much easier to assess when one adds further constraints on the set of
possible moves, as in~\cite{GoRe95} or~\cite{FoGoLo12}. In fact, we do
precisely the same, reducing to~\CR\ a new and more flexible pursuit
game, which nevertheless is purely an extension of~\CR---\textit{i.e.} all
instances of~\CR\ are also instances of our game, see
Section~\ref{sec-CRp}. Our game greatly simplifies the proof of existing
results -- see Corollary~\ref{th-NP} -- and also allows us to almost
simulate~\CR\ played on a directed graph, a game which is known to be {\rm
EXPTIME-}complete.

{\rm NP}-hardness of~\CR\ is proven in~\cite{Foetal10},
and also in our Corollary~\ref{th-NP}, both proofs through DOMINATING-SET. Although
the arguments are based on a slightly different concept, both produce a graph in which
no real action takes place: if the robber is captured, capture occurs at
the first move. In this case, the only difficulty for the cops lies in the
choice of their initial position, and the game is indeed trivial for the
robber. So, apparently, to the researcher trying to prove {\rm NP}-hardness, the
elective initial positions are more of an asset than a hindrance.
However, as soon as one tries to attain hardness for higher
complexity classes, the game can not be made trivial for the robber any
more, and the initial choice of the cops is not hard enough to be
exploitable. On the contrary, in this situation, the unpredictability of the
initial positions becomes the main obstacle, and a different
technique must be used.

In~\cite{GoRe95}, \cite{FoGoLo12}, and~\cite{Foetmanyal12}, {\rm PSPACE-}~and {\rm EXPTIME-}hardness
results for variants of~\CR\ are proven by reduction of games played on
boolean formul\ae. The typical method is to simulate operations on boolean
variables by the action of several gadgets, which the robber is forced to
traverse in a well determined order during his flight. At the end of the
simulated boolean game, the robber has a chance to reach perpetual safety.
The most immediate type of perpetual safety comes in the form of a safe subgraph
(a subgraph in which the cops can never capture the robber, whatever the
initial position), however the presence of safe subgraphs is incompatible
with the players choosing their own initial positions, since the robber could
just pick his vertex in it.  In~\cite{FoGoLo12} is discussed a variant
of~\CR~\textit{without recharging}, \textit{i.e.} imposing a constraint on
the maximal number of moves that each cop can make, and, in the same
paper, {\rm PSPACE-}completeness is proven for this variant. Here, the
uncertainty over the initial position is dealt with observing that no cop
can start too far away from any vertex of the graph, since at least one cop
must be able to reach any vertex in case the robber shows up there, hence
the positions of the cops can be forced by connecting long antenn\ae\ to
the vertices where we want a cop to be placed. A complicated collection of
gadgets can then be devised to constrain the whole initial position and
simulate~QBF (evaluation of Quantified Boolean Formul\ae). For directed graphs, the idea of~\cite{GoRe95} is to start
with a construction having fixed initial positions, in which the robber
reaches a safe subgraph if he wins the simulated boolean game. Then, the
constructed graph is modified by substituting the safe subgraph by a reset
mechanism, which is safe just as long as the cops are not precisely in
their initial positions. At this time, the robber is forced to exit the
reset mechanism and play the boolean game. The directness
of the graph is needed to make sure that the robber can not re-enter the
reset mechanism, going backwards from his starting position instead of
completing the game.

Our technique is kin to the directed graph case. In particular, we reduce
QBF to our extended variant of~\CR, which in turn reduces to~\CR. We are
not able to really simulate directed graphs, however our construction
produces a graph which behaves as if it were directed, in the sense that
both players can go the wrong way, but they have nothing to gain by doing
it. Our graph is divided into levels arranged in a circular fashion, so
that each level is connected just to the adjacent ones. The pursuit takes
place while players go around the circle: if a given formula is true then
the robber can be captured before he completes a single lap, if it is
false then he can keep circling forever. At diametrical opposite places on
the circle, we have two reset mechanisms, that can be used by either
player to synchronize with his opponent. The obstacle to proving
{\rm EXPTIME}-hardness (hence completeness) by the same method is evident from the description
above: all the interesting action in our graph takes place in a
polynomially bounded number of moves (actually sub-linearly bounded, if we
take every simulation step into account).

\def\V{\mathrm{V}}
The~\textit{number of cops} needed to capture the robber has received
intense study due to the long-standing conjecture of Meyniel that
$O(\sqrt{\lvert\V G\rvert})$~cops suffice on the graph~$G$, where
$\lvert\V G\rvert$ denotes the number of vertices of~$G$---this
conjecture is still open, and the best bound so far
is~$\lvert\V G\rvert2^{-(1+o(1))\sqrt{\log\lvert\V G\rvert}}$,
obtained recently~\cite{LuPeng11,ScoSu11}.
On the
other hand, the~\textit{number of turns} has been considered just
recently~\cite{Boetal09}, obtaining, for some classes of graphs, upper and lower
bounds on the length of the longest games which are linear in the number of
vertices.  Clearly, if the length of any given game of~\CR, provided that
it is a win for the cops and that it is played optimally, could be bounded
by a polynomial in the size of the graph, then, as a corollary of
Theorem~\ref{th-main}, \CR\ would be complete for~{\rm PSPACE}.
However, even though no
super-linear lower bound on the length of the longest games is known to
the author, we believe that our techniques may lead to a proof of
completeness of~\CR\ for~{\rm EXPTIME}. This would imply a
super-polynomial lower bound on the length of the longest games.

\subsection*{Acknowledgements}

The results in this paper first appeared in the author's Master's
thesis~\cite{Mam04}, supervised by Alessandro Berarducci.
The author would like to thank Alessandro Berarducci for suggesting the
problem and reading a preliminary version of this paper. The author is also
obliged to Eleonora Bardelli for comments on a preliminary version, and
for encouraging him to write the paper in the first place.

\vfill\pagebreak
\section{Statement of the result \& outline of the proof}

\CR\ is the problem of determining which
player has a winning strategy in a given instance of the
following game. An instance of the game is described by an undirected
graph~$G$ and a natural number~$n$ smaller than the number~$\lvert\V
G\rvert$ of vertices
of~$G$. The game is played by two contenders, \textit{Cops}
and~\textit{Robber}, with perfect information, by
moving tokens on the vertices of~$G$. At the first move, Cops
places $n$ tokens,~\textit{the cops}, each on a vertex of the graph, multiple
cops are allowed on the same vertex. Then Robber places a single
token,~\textit{the robber}, on some vertex. After that, the
players take turns at moving their tokens from vertex to vertex. At his
turn, each player can move along an edge of the graph each of his tokens,
or leave it unmoved---each token moves at most once per turn, all the
cops can move simultaneously.
Cops wins if at any
time the robber is on the same vertex as one of the cops (the robber has
been~\textit{captured}). Robber wins
if the robber escapes perpetually---or, equivalently, Robber wins after
$\lvert\V G\rvert^{n+1}$~moves have been played,
so the game is not actually infinite.

We will prove the following result.
\begin{theorem}\label{th-main}
\CR\ is hard for {\rm PSPACE}.
\end{theorem}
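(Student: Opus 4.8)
The plan is to reduce QBF --- the evaluation problem for quantified Boolean formulae, the canonical PSPACE-complete problem --- to \CR. Following the route announced in the introduction, I would not attempt this directly but factor the reduction in two stages through an auxiliary pursuit game \CR', introduced in Section~\ref{sec-CRp}, of which \CR is a special case: first reduce QBF to \CR', then reduce \CR' to \CR. Since every instance of \CR is also an instance of \CR', this suffices, provided each stage is a polynomial-time reduction that preserves the identity of the winning player.

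First I would isolate the flexible game \CR'. The point is to give the board just enough extra rigidity --- say, tokens whose moves are constrained, or edges that one player has no reason to use in one of the two directions --- that one can \emph{force} moves, something \CR as it stands resists, since on a connected graph the cops can always retreat to their initial position and begin afresh; but little enough that \CR' can still be realised inside ordinary \CR. The reduction of \CR' to \CR would then replace these features by undirected gadgets, the main tool being the familiar long pendant path: a cop must always stay close enough to intercept a robber who might turn up at its far end, so attaching such antennae at chosen vertices constrains where the cops can sit.

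The core of the whole argument is the reduction of QBF to \CR'. Given $\Phi = Q_1 x_1 \cdots Q_m x_m\, \varphi$ with $\varphi$ its quantifier-free matrix, I would build a graph whose vertices fall into $O(m)$ levels arranged in a cycle, consecutive levels joined by edges and each level carrying a gadget. A robber who wishes to survive is forced to keep advancing around the cycle, and one full lap corresponds to one evaluation of $\Phi$: crossing the gadget of an existential variable is a choice the robber makes, crossing that of a universal variable is a choice the cops impose on him, and a final block of gadgets checks whether the resulting assignment satisfies $\varphi$. Everything is calibrated so that if $\Phi$ is true the cops capture the robber within a single lap, whereas if $\Phi$ is false the robber can circle forever; hence Cops wins the constructed instance if and only if $\Phi$ is true, which is the theorem.

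The hard part --- and the reason it pays to detour through \CR' rather than work with \CR directly --- is the elective initial positions. Because the robber may start on any vertex, the graph must contain no subgraph that is outright safe for him, yet he must still be \emph{compelled} into the evaluation machinery instead of idling somewhere. This is the job of the two reset mechanisms placed at diametrically opposite levels of the cycle: each is a region that is safe for the robber precisely while the cops are away from a designated home configuration, so the cops can always evict the robber from a reset by returning home --- but only at the cost of the synchronisation they need elsewhere --- after which the robber is thrown back onto the cycle. Placing the two resets antipodally is meant to guarantee that no initial placement of the cops can simultaneously cover the robber and neutralise both resets, so that the outcome is decided by play inside the gadgets. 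Carrying this through rigorously --- in particular, verifying that neither player ever gains by going the ``wrong way'' round the cycle, so that the undirected graph behaves as though it were directed --- is the delicate point on which correctness of the construction rests.
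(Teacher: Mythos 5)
Your top-level architecture matches the paper's: a two-stage reduction of QBF through a more flexible auxiliary pursuit game of which \CR{} is a special case, a graph arranged in levels around a circle traversed at one level per move, and two reset mechanisms that let either player force the starting configuration of a fixed-position construction. But two of your key technical choices are wrong, and one of them is precisely the point the paper identifies as the obstacle. You propose to realise the auxiliary game inside \CR{} by attaching long pendant paths (antennae) that pin down where the cops must sit. As the paper itself explains, that device works for the \emph{without recharging} variant of~\cite{FoGoLo12}, where each cop has a bounded move budget and therefore must \emph{start} near every vertex it may later have to cover; in unrestricted \CR{} it fails, because a cop can always walk to the antenna tip after the robber commits to it---indeed the paper stresses that on a connected graph the cops are never penalised for a bad position, since they can always return and restart. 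The paper's auxiliary game is instead \CR{} on a graph with edges labelled \emph{protected} or \emph{unprotected} (capture counts only through unprotected edges), and its reduction to \CR{} (Lemma~\ref{th-CRp}) is a product construction: every vertex is blown up into a copy of a graph on which the robber evades $n$ cops forever (the incidence graph of a projective plane), protected adjacencies are realised only through edges of that evasion graph, unprotected ones by all possible edges between the corresponding blocks. This is the genuinely non-obvious ingredient your plan lacks, and nothing in the proposal substitutes for it. Relatedly, the paper is explicit that it \emph{cannot} simulate directed edges; the ``nothing to gain by going backwards'' behaviour is an emergent property of the final circular construction, not a primitive of the auxiliary game.

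Second, you have the quantifier roles reversed. Cops wins exactly when $\Phi$ is true, so Cops must control the existential variables (steering his variable-cops at the bifurcations of their tracks) and Robber must control the universal ones (choosing which side of a diamond in his own track to run through, which forces the corresponding cop's branch on pain of opening a path to safety). With your assignment the reduction decides the wrong quantifier prefix. A smaller inaccuracy: the antipodal placement of the two resets is not there to prevent the cops from neutralising both at once---that a safe starting vertex exists follows from counting the protected $a''$ vertices against the number of cops. It is a timing device: a robber evicted from one reset and travelling at one level per turn reaches the diametrically opposite reset strictly before any cop can, because he moves first and has exactly half the circle to cross.
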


In the three sections that follow, we will prove Theorem~\ref{th-main}.
Precisely, in Section~\ref{sec-CRp}, we will define a pursuit game and
show that the new game and~\CR\ are mutually {\rm LOGSPACE}-reducible.
Then, in Section~\ref{sec-CRps}, we will describe a construction proving
that satisfiability for boolean formul\ae\ with quantifiers is {\rm
LOGSPACE}-reducible to our new game assuming fixed initial positions of
cops and robber.  Finally, Section~\ref{sec-proof} will use the
construction of Section~\ref{sec-CRps} to conclude the proof of
Theorem~\ref{th-main}.  Although we will try to be as formally correct as
reasonable, we believe that some abuses of nomenclature actually improve
readability.  In particular, the words \textit{cops} and \textit{robber}
are going to be used interchangeably for the tokens and for the vertices
upon which they reside; the same symbol is going to denote an instance of
the problem~\CR, the game constituting that instance, and the graph whereon
that game is played; \textit{\&c}.

\section{Cops and Robber-with-protection}
\label{sec-CRp}

\def\CRp{\textsc{C\&Rp}}

\textsc{Cops and Robber-with-protection} (\CRp) is a variant of~\CR\
played on an undirected graph whose edges are labelled as
either~\textit{protected} or~\textit{unprotected}. The rules for placing
and moving tokens are the same as in~\CR, in
particular tokens can move along any edge irrespective of its label. The
victory condition, on the other hand, changes as follows: Cops wins when
he moves a cop to the vertex occupied by the robber~\textit{through an
unprotected edge}, Robber wins if the robber escapes perpetually.

In the following, we will need to draw diagrams of labelled graphs
for~\CRp. Protected and unprotected edges will be represented by broken
($\xymatrix@1{\ar@<-.5ex>@{--}[]+<2em,1ex>&}$\kern-.4em)
and continuous
($\xymatrix@1{\ar@<-.5ex>@{-}[]+<2em,1ex>&}$\kern-.4em)
lines respectively.
Notice that the presence of multiple edges or loops has no influence on
the classical~\CR\ game, therefore, in this context, graphs are often
assumed to be simple (or reflexive, depending on how the rules of the game
are stated): for us, there is clearly no loss of generality.
On the contrary, in the context of~\CRp, multiple edges can be neglected, but adding
a loop has no consequences only as long as the loop is protected.
Therefore we allow for loops in our graphs, and
we call~\textit{unprotected vertex} a vertex connected to
itself by an unprotected edge, and~\textit{protected vertex} a vertex
which is not unprotected. Protected and unprotected vertices will be
represented by empty ($\circ$) and full ($\bullet$) circles respectively.
Should the reader prefer, he can think at~\CRp\ as defined using just
simple graphs, where both edges and vertices are labeled as
protected or unprotected.
In this case, he should add to the victory condition that if both a cop
and the robber happen to reside on the same vertex at once, with the cop
about to move, then that cop can capture the robber if and only if the
vertex is unprotected.

Observe that we can easily reduce \CR\ to \CRp\ by simply declaring all
edges and all vertices unprotected. The rest of this section will be
devoted to prove the converse.

\begin{lemma}\label{th-CRp}
\CRp\ is {\rm LOGSPACE}-reducible to~\CR.
\end{lemma}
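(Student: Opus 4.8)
The plan is to build, from a \CRp-instance $(G, n)$, a \CR-instance $(G', n')$ such that Cops wins on $G'$ iff Cops wins on $G$. The intuition is that a protected edge is one along which a cop may travel but along which it cannot effect a capture; we want to mimic this in ordinary \CR\ by \emph{subdividing} every protected edge, inserting a new intermediate vertex, so that a cop coming across a former protected edge arrives at the intermediate vertex and is always one step behind the robber sitting at the endpoint. The trouble with a bare subdivision is timing and parity: a single extra vertex changes distances and hands the robber (or the cops) a free tempo that could be exploited. So the real construction will slow the game down uniformly — replace \emph{every} edge of $G$ by a path of some fixed length $\ell$ (say $\ell$ even for unprotected edges and $\ell$ odd, or $\ell+1$, for protected edges, with the lengths chosen so that the parity of the shift is the same for protected and unprotected edges but a cop crossing a protected edge lands, at the end, exactly on the neighbour of the robber's vertex rather than on it). Unprotected \emph{vertices} (loops) are kept as genuine loops; protected vertices get no loop. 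One also has to decide $n' = n$: the number of cops should not change.

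The key steps, in order, are: \textbf{(1)} Fix the uniform subdivision length and the parity bookkeeping, and describe $G'$ precisely, including how loops/vertices are treated; verify $G'$ is \logspace-computable from $G$ (it is, since it is an entirely local gadget replacement with polynomially bounded gadget size). \textbf{(2)} Show that a winning Cops strategy on $G$ lifts to one on $G'$: whenever the \CRp-cops would traverse an edge, the \CR-cops walk the corresponding subdivided path; since the robber on $G'$ must likewise walk subdivided paths, the simulation is faithful turn-by-turn, and a \CRp-capture through an unprotected edge becomes a genuine \CR-capture on the last edge of the corresponding unprotected path. The point to check is that the robber cannot gain anything by ``loitering'' inside a subdivided path — this is where keeping the robber's and cops' speeds matched and the path lengths uniform pays off. \textbf{(3)} Conversely, show a winning Cops strategy on $G'$ yields one on $G$: project the cops' positions on $G'$ back to $G$ (a cop strictly inside a path is projected to the endpoint it is heading away from, or handled by a ``shadow'' argument), and argue that if the \CR-cops ever capture on $G'$ then the capture edge projects to an unprotected edge of $G$, so the \CRp-cops win. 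The subtle direction is (3): one must rule out that the \CR-cops win by catching the robber ``mid-path'' in a way that has no \CRp-analogue, i.e.\ one must be sure the gadget does not create spurious capture opportunities — typically handled by showing the robber can always stay at the ``midpoint'' or synchronize so that, inside any path, he is never co-located with a cop unless the corresponding \CRp-position was already losing.

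The main obstacle I expect is exactly this faithfulness of the simulation under \emph{non-cooperative} play by both sides — the cops on $G'$ are not obliged to move in lockstep along the gadget paths, and neither is the robber, so one cannot simply ``read off'' a \CRp-play from a \CR-play. The standard remedy, which I would use, is a \emph{potential/shadow argument}: define for every \CR-position on $G'$ an associated \CRp-position on $G$ (rounding each token to a vertex of $G$ in a way that never helps the robber), show this projection commutes with optimal play up to the uniform slowdown factor, and show captures are preserved and reflected under it. Parity of the chosen path lengths is the delicate quantitative input that makes the shadow of a protected-edge crossing land adjacent to, rather than on, the robber; getting that parity right for loops/vertices simultaneously with edges is the one place where a careless choice breaks the equivalence.
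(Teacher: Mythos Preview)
Your subdivision approach has a genuine gap, and it fails already on very small examples. Take $G$ to be two protected vertices joined by a single protected edge, with $n=1$. In \CRp\ the robber trivially wins: there are no unprotected edges at all, so no capture can ever occur. Under any version of your construction, $G'$ is a path (of whatever length your parity bookkeeping dictates), and in ordinary \CR\ one cop wins on every path---indeed on every tree---simply by walking toward the robber. More generally, whenever $G$ is a tree all of whose edges and vertices are protected, Robber wins in \CRp, yet any subdivision of a tree is again a tree and hence a cop-win graph in \CR. So the implication ``Robber wins on $G$ $\Rightarrow$ Robber wins on $G'$'' fails outright; no amount of parity tuning will repair it.

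The conceptual problem is that protection in \CRp\ allows cop and robber to coexist on the same vertex (a protected vertex), or on adjacent vertices (across a protected edge), without capture. To simulate this in \CR\ you must give the robber \emph{local room to dodge} even when the simulated positions coincide. Subdividing an edge delays a cop traversing it, but it gives the robber no place to go once the cop arrives: if the robber is cornered in $G$, he is equally cornered in your $G'$, only now being cornered is fatal. What the paper does instead is to blow up each vertex $v_i$ of $G$ into a whole \emph{cloud} $G'_i$, chosen so that the robber can evade $n$ cops indefinitely inside the union of these clouds (concretely, the clouds are built from a graph~$P$---e.g.\ the incidence graph of a projective plane of sufficiently large order---on which $n$ cops always lose). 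Unprotected edges of $G$ become complete bipartite connections between the corresponding clouds, so a cop crossing one can land on any vertex of the target cloud and capture; protected edges become only partial connections, so the robber always has an unthreatened vertex to step to in the target cloud. This is what makes the shadow argument in your step~(3) actually go through: when the projected \CRp-position is safe, the robber in $G'$ genuinely has a safe vertex available in the right cloud, which subdivision cannot provide.
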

\begin{proof}
Let an instance of~\CRp\ be given by a labelled graph~$G$ and a number~$n$ of
cops. We will show how to construct a graph~$G'$ such that the instance
of~\CR\ described by~$G'$ and~$n$ is a win for Cops if and only if the given
instance of~\CRp\ is a win for Cops.

More precisely, let $\{v_1\dotsc v_g\}$ be the vertices of~$G$. Then
the vertices of $G'$ will be partitioned into $g$ subsets~$G'_1\dotsc
G'_g$ so that the game of~\CR\ played on~$G'$ simulates the game of~\CRp\ 
played on~$G$ in the following sense.
\myitem{a}\label{cond-homo}%
The projection $\pi\colon \V G'\to \V G$ mapping $G'_i$ to~$v_i$ is a
graph homomorphism (neglecting the labels).
\myitem{b}\label{cond-robberiso}%
If the projection of the robber can reach in one move a vertex~$v_i$ of~$G$
which is not threatened (according to the rules of~\CRp) by any of the
projections of the cops, then the robber can reach in one move a vertex
in~$G'_i=\pi^{-1}(v_i)$ which is not threatened (according to the rules of~\CR)
by any cop in~$G'$.
\myitem{c}\label{cond-copsiso}%
If the projection of the robber can not escape capture on the next move by
the projections of the cops in~$G$ (playing~\CRp), then the robber can not
escape capture on the next move in~$G'$ (playing~\CR).

Properties \ref{cond-homo}--\ref{cond-copsiso} imply our claim that $G$ is a win for
$n$~cops in~\CRp\ if and only if $G'$ is a win for $n$~cops in~\CR.
To prove this fact, we consider a game of~\CR\ played on~$G'$, and we look
at the~\textit{projected game} of~\CRp\ on~$G$, \textit{i.e.}\ the game
in which cops and robber are on the projections on~$G$ of cops and robber
in~$G'$.
Because of
property~\ref{cond-homo}, all the moves played in the projected game are
legal.
Assume that there is a perpetually escaping \CRp-strategy for Robber
in~$G$, then he can successfully play~\CR\ in $G'$ by applying his \CRp-strategy
to the projected game.
In fact, using property~\ref{cond-robberiso},
given a safe move in the
projected game, which we are assuming he always has, Robber can produce a
safe move in~$G'$. On the other hand,
the same technique of looking at the projected game
works for Cops, if we assume that \textit{he} has a \CRp-strategy for~$G$,
by property~\ref{cond-copsiso}.

Now we proceed with the construction of~$G'$. First we need an auxiliary
graph~$H$ enjoying the following property. \textit{Robber has
a winning strategy for the instance of \CR\ played by $n$~cops on~$H$.
Moreover,
the vertices of~$H$ are
partitioned into $g$~subsets $H_1\dotsc H_g$ in such a way that,
for any position of cops and robber in which the robber is not captured,
and for any $i=1\dotsc g$, Robber has a winning move that translates his token into~$H_i$}.
In order to construct~$H$, we start with a graph~$P$ such that the robber
can always escape $n$~cops for every initial position (in which the robber
is not already captured). Such a graph can be constructed in time
polynomial in~$n$ -- for instance, $P$ can be the incidence graph of a
finite projective plane of order the smallest power of two greater than or
equal to~$n$ (see~\cite{Pra10}: the proof of Theorem~4.1 and the preceding discussion). The vertices of~$H$ are~$\V H = \{1\dotsc g\}\times \V P$.
We put and edge between vertices $(i,p)$ and~$(j,q)$ if and only if there
is an edge of~$P$ between $p$ and~$q$ or $i=j$.  Finally we
define~$H_i=\{i\}\times\V P$. Clearly, Robber has the required strategy,
in fact he can evade the cops using just the component~$P$, and is free
to choose the other component of each move.

We can now define~$G'$.
The set of vertices of~$G'$ coincides with the set
of the vertices of~$H$, indeed we let~$G'_i=H_i$. Two vertices $v$ and~$w$
of~$G'$ are connected by an edge when either their projections~$\pi(v)$ and~$\pi(w)$ are
connected by an unprotected edge of~$G$, or there is an edge of~$H$
between $v$ and~$w$ and the projections~$\pi(v)$ and~$\pi(w)$ either coincide or are
connected by a protected edge of~$G$. To state it differently, we put in
$G'$ all the edges of~$H$ whose projections are either loops or protected
edges of~$G$, then we add all possible edges~$e$ such that the projection
of~$e$ is either an unprotected edge or a loop at an unprotected vertex.

Remains to prove that $G'$ satisfies properties \ref{cond-homo}--\ref{cond-copsiso}.
\myitem{\ref{cond-homo}} Immediate from the definition.
\myitem{\ref{cond-robberiso}} If the projection of the robber can reach a
safe vertex~$v_i$ in~$G$, then no projection of a cop is connected
to~$v_i$ by an unprotected edge, hence those cops that can
reach~$G'_i$ can do it just moving through edges of~$H$. On the
other hand, since the projection of the robber can reach~$v_i$, then the
robber can reach~$G'_i$, and it can do that moving trough any edge in~$H$.
Follows from the construction of~$H$ that the robber can reach a vertex of~$G'_i$ avoiding all immediate threats.
\myitem{\ref{cond-copsiso}} If the projection of the robber is doomed to be captured at the
next move, then all the vertices reachable from its current position
in~$G$ are connected to the projection of a cop by an unprotected edge.
By~\ref{cond-homo} the projection of the robber must reside on one of this
vertices at the next move, call it $v_i$. When its projection is in~$v_i$,
the robber is in~$G'_i$.
Since $v_i$ is connected to the projection of a cop by an
unprotected edge of~$G$, then all vertices in~$G'_i$ are connected to a cop
by an edge of~$G'$. Hence all vertices reachable by the robber are under
the immediate threat of some cop.

Finally, it is standard to verify that the construction of~$G'$ can be
carried out in~{\rm LOGSPACE}.
\end{proof}

The following corollary was first proven in~\cite{Foetal10}, and also it is,
\textit{a fortiori}, a consequence of our main result. Nevertheless, it
provides an interesting example, since its proof is greatly simplified by
the use of~\CRp.

\begin{corollary}\label{th-NP}
\CR\ is {\rm NP}-hard.
\end{corollary}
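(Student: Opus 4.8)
The plan is to reduce DOMINATING-SET to \CRp\ and then invoke Lemma~\ref{th-CRp}. Recall that DOMINATING-SET asks, given a graph~$D$ and an integer~$k$, whether $D$ has a dominating set of size~$k$, and that this problem is {\rm NP}-complete. Given such an instance, I would build a labelled graph~$G$ on the same vertex set as~$D$ together with one extra vertex, say~$\infty$. Inside the copy of~$D$ all edges are declared \emph{protected}, so that no capture can ever take place there; the robber, if he manages to sit on any vertex of~$D$, is safe forever on the next moves provided the cops cannot be co-located with him through an unprotected edge. Then I would attach the vertex~$\infty$ to every vertex of~$D$ by an \emph{unprotected} edge, and make~$\infty$ a protected vertex (no loop at~$\infty$), so that a cop sitting on a vertex~$v\in\V D$ threatens~$\infty$, and a cop on~$\infty$ threatens every vertex of~$D$ but cannot be captured there.

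The key point is then a one-move argument. I would set the number of cops to~$n=k$ and argue that Cops wins on~$G$ with $k$~cops if and only if $D$ has a dominating set of size~$k$. First, suppose $S\subseteq\V D$ is a dominating set with $\lvert S\rvert=k$: Cops places his $k$ cops on the vertices of~$S$. Wherever the robber chooses to start, he is either on~$\infty$—then some cop in~$S$ is adjacent to~$\infty$ via an unprotected edge and captures him immediately (note $S\neq\varnothing$ since $k\geq 1$ can be assumed; handle $k=0$ trivially)—or he is on some $v\in\V D$, which, being dominated by~$S$, is either in~$S$ (capture, since we can arrange an unprotected loop at dominated vertices—see below) or adjacent in~$D$ to some $s\in S$; but edges inside~$D$ are protected, so I must instead route capture through the fact that every vertex of~$\V D$ that a cop occupies must itself be threatened. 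The cleanest fix is to also put an \emph{unprotected loop} on every vertex of~$\V D$, so that a cop and the robber sharing a vertex of~$D$ results in capture, while adjacency within~$D$ remains protected and hence useless to the cops for capturing along an edge. With this adjustment, $S$ dominating means every vertex of~$G$ is either in~$S$ or adjacent to~$S$ via an unprotected edge (the unprotected $\infty$-edges and the unprotected loops), so the robber is captured on the first move.

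Conversely, suppose Cops wins. Since all edges within the copy of~$D$ are protected and there is no unprotected edge between two vertices of~$\V D$, once the robber is on a vertex of~$\V D$ that is not threatened, he is never forced to move into a threatened vertex: his only truly dangerous neighbour is~$\infty$, which he need never visit. Hence, as in the standard DOMINATING-SET reductions for pursuit games, if Cops has any winning strategy he must in fact capture at the very first move, which forces his initial placement~$S$ (of size $\le k$) to threaten every vertex of~$G$; restricting attention to~$\V D$ and using the unprotected loops and $\infty$-edges shows $S\cap\V D$ (padded arbitrarily to size exactly~$k$ if some cops sat on~$\infty$) is a dominating set of~$D$. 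Combining the two directions, \CRp\ is {\rm NP}-hard, and Lemma~\ref{th-CRp} transfers this to~\CR.

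The main obstacle—and the place where the use of~\CRp\ genuinely simplifies matters compared to the plain-\CR\ argument of~\cite{Foetal10}—is making rigorous the claim that a winning Cops strategy must win on the first move. In plain \CR\ one has to build, by hand, a subgraph the robber can flee into and verify he can always stay clear of the cops; here the protected edges do that bookkeeping for us, since the entire copy of~$D$ is automatically a safe region for the robber, and the only escape route the cops can plausibly block is the single bottleneck through~$\infty$. So the argument reduces to checking that the robber, once safely on a protected vertex of~$D$, has no incentive ever to step toward~$\infty$, which is immediate. The remaining details—that the construction is computable in {\rm LOGSPACE}, and the trivial handling of the degenerate cases $k=0$ or $\V D=\varnothing$—are routine.
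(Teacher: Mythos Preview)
Your construction does not work, and the error is structural rather than a matter of missing details. You declare the edges of~$D$ \emph{protected} and the spokes to~$\infty$ \emph{unprotected}. This means that a cop sitting at~$s\in\V D$ threatens, via unprotected edges, only~$s$ itself (through the loop) and~$\infty$; it does \emph{not} threaten any neighbour of~$s$ in~$D$, because those edges are protected. Hence your key sentence ``$S$ dominating means every vertex of~$G$ is either in~$S$ or adjacent to~$S$ via an unprotected edge'' is simply false: a vertex $v\in\V D\setminus S$ is adjacent via unprotected edges only to~$v$ and to~$\infty$, neither of which lies in~$S$. So the forward direction collapses: placing the cops on a dominating set does not let them capture a robber who starts at $v\in\V D\setminus S$ on the first move.

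Worse, the reduction is broken globally: on your graph, \emph{two} cops always win regardless of~$D$. Place one cop at~$\infty$ and one at any $w\in\V D$. Wherever the robber appears---at some $v\in\V D$, the cop at~$\infty$ captures along the unprotected spoke; at~$\infty$, the cop at~$w$ captures along the unprotected spoke. So for every $k\ge 2$ the answer to your \CRp\ instance is ``Cops wins'', independent of whether $D$ has a dominating set of size~$k$. Your converse argument fails for the same reason: once a cop reaches~$\infty$, \emph{every} vertex of~$D$ is threatened, so the robber certainly is forced to move into a threatened vertex; the claim that ``his only truly dangerous neighbour is~$\infty$'' ignores that the danger at~$\infty$ radiates back to all of~$D$.

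The paper's construction is essentially the mirror image of yours, and that inversion is the whole point. It takes $G'$ to be the complete graph on~$\V D$ with unprotected loops everywhere; an edge is \emph{unprotected} precisely when it is an edge of~$D$, and \emph{protected} otherwise. Then a cop at~$s$ threatens exactly the closed neighbourhood $N_D[s]$, so the set of cops' positions threatens all of~$G'$ if and only if it is a dominating set of~$D$; and the complete graph of protected edges lets the robber hop to any unthreatened vertex at every turn. There is no extra vertex~$\infty$. If you want to repair your argument, swap the roles of protected and unprotected inside~$D$ and drop~$\infty$; you will then recover the paper's proof.
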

\begin{proof}
By reduction of DOMINATING-SET. Let $G$ be any graph, construct the
labelled graph~$G'$ as follows. The set of vertices of~$G'$ coincides with
the set of vertices of~$G$. All vertices of~$G'$ are made unprotected.
The edges of~$G'$ form a complete graph, those edges that are in~$G$ are
labelled as unprotected, the others are protected.

We show that there is a dominating set of $n$ vertices of~$G$ if and only if $n$ cops
can win on~$G'$. In fact, if there is a dominating set, then the cops
can be placed on that set at the first move, and wherever the robber shows
up, it will be captured at the next move. If there is no dominating set,
then at any move there is at least one unthreatened vertex, and the robber can
be moved to it since $G'$ is a complete graph.
\end{proof}

\section{\CRp\ with a given starting position}
\label{sec-CRps}

In this section we will prove the PSPACE-hardness of a simplified version of~\CRp\ where the
starting position of all the tokens is fixed (instead of being decided by
the players at their respective first moves). The result is uninteresting in itself,
since \CR\ with given starting position is already known to be complete for
EXPTIME~\cite{GoRe95}. Nevertheless, in order to prove Theorem~\ref{th-main},
we will exploit some peculiarity of the particular graph constructed below.

\def\CRps{$\textsc{C\&Rp}^\star$}
Let us define~\CRps\ as the problem of deciding what player has a winning
strategy in our simplified game of~\CRp. More precisely an instance
of~\CRps\ is given by a graph~$G$ whose edges are labelled
as~\textit{protected} or~\textit{unprotected}, a natural number~$n$, an
$n$-element multiset of vertices of~$G$ for the Cops' tokens, and a
distinguished vertex of~$G$ for the Robber's token. The problem is to
determine whether Robber has a winning strategy for the game of~\CRp,
played on~$G$, starting with the tokens on the specified positions, Robber
moves first.

\begin{remark}
When dealing with games, in common language, one often talks about
how a game is played
when both players follow~\textit{their best strategies}.
However, the concept is
not well defined, since apparently all moves are equally good for the player which has no
winning strategy. Nevertheless, we will often say
that such and such is going to happen when the game is \textit{played
correctly}, meaning that
\myitem{} what constitutes a best strategy will be~\textit{fixed
conventionally}, by explicit construction,
\myitem{} and it will be proven that deviation from the stipulated best
strategy will not afford any advantage, \textit{i.e.} if any player has a
winning strategy then the official strategy is a winning strategy for him.
\end{remark}

\def\vtx{*+<10pt>[o]{\bullet}}
\def\pvtx{*+<-1pt>[o]{\circ}}
\def\vtxlabel#1#2#3{#1^{\mathrlap{#2}}_{\mathrlap{#3}}}
\begin{lemma}\label{th-CRps}
{\rm QBF} is {\rm LOGSPACE}-reducible to~\CRps.
\end{lemma}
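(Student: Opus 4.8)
The plan is to build, from a quantified boolean formula $\Phi = Q_1 x_1 Q_2 x_2 \cdots Q_k x_k\, \varphi$, a labelled graph $G$ together with a fixed starting position so that the game of~\CRp\ on~$G$ simulates, move by move, the evaluation game for~$\Phi$ (the $\exists$-player trying to make $\varphi$ true, the $\forall$-player trying to make it false). The robber will play the role of the $\exists$-player and the cops — or rather one distinguished ``active'' cop, with the others used only to block off escape routes — the role of the $\forall$-player; the robber traverses a long chain of gadgets, one block of gadgets per quantifier, in a forced left-to-right order, and at each block the player owning that quantifier gets to choose which of two parallel corridors the robber passes through, thereby setting the truth value of $x_i$. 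Protected edges are the crucial tool: they let us build one-way passages (an edge the robber can cross but along which he can never be captured, and vice versa), so that once the robber leaves a gadget he cannot profitably return, and so that the cop assigned to guard a variable's value cannot abandon its post without letting the robber escape forever.

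The construction would proceed gadget by gadget. First I would design a \emph{choice gadget} for an existential quantifier: a small diamond where the robber picks one of two branches (``$x_i=\text{true}$'' or ``$x_i=\text{false}$''), with protected edges arranged so the choice is genuinely free and irrevocable, and so that exactly one cop gets committed to shadowing that choice for the remainder of the game. Then a \emph{universal choice gadget} doing the same but with the roles reversed: here the cops, by where they move, force the branch, and the robber is compelled to follow; again a cop is left committed to recording the value. After the $k$ quantifier blocks comes a \emph{clause-checking section}: the robber must run past one gadget per clause of $\varphi$, and at clause $C_j$ the cops that are shadowing the literals of $C_j$ are positioned so that the robber survives the gadget if and only if at least one literal of $C_j$ is satisfied under the current assignment. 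If the robber clears every clause gadget he reaches a final vertex from which — because $\Phi$ is true along this play — he can loop forever on a small safe cycle; if some clause fails, the guarding cops catch him in that gadget. The number of cops $n$ is chosen to be exactly one active cop plus one ``value'' cop per variable plus whatever fixed number of cops the one-way and blocking constructions need, all polynomial in $k$, and the whole description is plainly \textsc{logspace}-computable.

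The main obstacle, and where most of the care goes, is enforcing \emph{order and irrevocability} against a player who in ordinary~\CR\ can always backtrack. I would handle this exactly as the introduction promises: make the graph behave as if directed by using protected edges to create valves. Concretely, between consecutive gadgets I would insert a short ``airlock'': a configuration of protected and unprotected edges, guarded by a constant number of dedicated cops in fixed positions, such that (i)~the robber can pass forward through it at a predictable tempo, (ii)~if the robber lingers or tries to go back, the guarding cops capture him, and (iii)~if a guarding cop leaves its post, the robber reaches perpetual safety. The delicate point is tempo synchronization: all cops move simultaneously, so I must count moves carefully and pad corridors with the right lengths so that the active cop and the value cops arrive at each gadget neither too early (which would let them pre-block a branch the robber has not yet chosen) nor too late (which would let the robber slip through unguarded). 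The correctness argument then splits into two symmetric halves — ``if $\Phi$ is true then Robber has a winning strategy'' (he plays the $\exists$-player's winning strategy in the evaluation game, translating it through the gadgets, and the airlocks guarantee no deviation by Cops helps) and ``if $\Phi$ is false then Cops win'' (dually) — using the same ``projected game'' bookkeeping style as in the proof of Lemma~\ref{th-CRp}. I expect the bulk of the write-up to be the explicit move-by-move verification of a single generic airlock and a single generic choice gadget; once those are nailed down, chaining them and tallying the cop count is routine.
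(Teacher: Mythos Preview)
Your plan is coherent but differs substantially from the paper's construction, and the parts you defer are exactly where the difficulty lies.

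First, the comparison. You assign the robber the $\exists$-role and rely pervasively on protected edges to build ``airlocks'' and ``one-way valves.'' The paper does almost the opposite: the robber plays $\forall$, Cops plays $\exists$, and \emph{every} edge and vertex of $G_\Phi$ is unprotected except for a handful of terminal safe havens. Directionality is enforced not by protection but by a simple level structure --- each vertex sits on a numbered level, edges join only adjacent levels --- together with two cops placed one level behind the robber that chase it downward. A token moving one level per turn can never be overtaken from behind, so backtracking is pointless for either side; that is the whole ``irrevocability'' argument. The robber's diamond choices set the universal variables; each cop's track bifurcates at the appropriate level to record an existential choice; and a small threat-of-escape device (the $a,b$ vertices leading to safe havens) forces the cop on each universal track to mirror the robber's branch. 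Clause-checking is a single level at the bottom. No airlocks, no stationary guard cops, no per-gadget tempo analysis.

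Now the gap in your proposal. You write that protected edges ``let us build one-way passages,'' but a protected edge by itself carries no direction: both players traverse it freely both ways; it only suppresses capture along that edge. Turning this into an airlock with your properties (i)--(iii), guarded by a \emph{constant} number of cops that can neither be lured away nor bypassed, is not a detail --- it is the whole lemma, and you have not built one. Your own remark that tempo synchronisation is ``the delicate point'' is accurate, and it is precisely what your write-up postpones. A smaller but related issue: in \CRps\ Robber moves first. With the robber as the $\exists$-player under a $\forall\exists\cdots$ prefix, each universal choice must be visibly committed by Cops \emph{before} the robber's responding existential move, despite Cops moving second --- one more synchronisation burden your unspecified gadgets must absorb. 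The paper's role assignment (robber ${=}\forall$) matches the move order and avoids this entirely.
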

\begin{proof}
\begin{figure}
\footnotesize{
\[\xymatrix@=11pt{
	&
	\save[]+<0pt,3ex>*{\text{\hbox to 0pt{\hss $2i$-th cop's track\hss}}}\restore
	\ar@{.}[d]+0&&*=<20pt>{}\ar@{--}[]+<4pt,0pt>;[dddddddd]+<4pt,0pt>&
	\ar@{.}[rd]+0 &
	\save[]+<0pt,3ex>*{\text{\hbox to 0pt{\hss robber's track\hss}}}\restore
	\vdots & \ar@{.}[ld]+0 & \ar@{--}[dddddddd]
	& && &
	\save[]+<0pt,3ex>*{\text{\hbox to 0pt{\hss $(2i-1)$-th cop's track\hss}}}\restore
	& &\ar@{.}[d]+0& \\
	&\vtx
		\save[]+R+<0pt,0pt>*{\vtxlabel{c}{2i}{2i-3}}\restore
		\ar@/^0pt/@{-}[]+0;[d]+0&&&
	& \vtx\ar@/^0pt/@{-}[]+0;[ld]+0\ar@/^0pt/@{-}[]+0;[rd]+0
		\save[]+R+<1pt,-1pt>*{\vtxlabel{r}{}{2i-3}}\restore&&
	& && && &\vtx
		\save[]+R+<0pt,0pt>*{\vtxlabel{c}{2i-1}{2i-3}}\restore
		\ar@/^0pt/@{-}[]+0;[d]+0& \\
	&\vtx
		\save[]+R+<0pt,0pt>*{\vtxlabel{c}{2i}{2i-2}}\restore
		\ar@/^0pt/@{-}[]+0;[d]+0&&&
	\vtx\ar@/^0pt/@{-}[]+0;[rd]+0
		\save[]+U+<0pt,2pt>*{\mathllap{{r}^{T}_{2i-2}}}\restore
		& & \vtx\ar@/^0pt/@{-}[]+0;[ld]+0
		\save[]+U+<0pt,2pt>*{\vtxlabel{r}{F}{2i-2}}\restore &
	& && && &\vtx
		\save[]+R+<0pt,0pt>*{\vtxlabel{c}{2i-1}{2i-2}}\restore
		\ar@/^0pt/@{-}[]+0;[d]+0& \\
	&\vtx
		\save[]+R+<0pt,0pt>*{\vtxlabel{c}{2i}{2i-1}}\restore
		\ar@/^0pt/@{-}[]+0;[d]+0&&&
	& \vtx\ar@/^0pt/@{-}[]+0;[ld]+0\ar@/^0pt/@{-}[]+0;[rd]+0
		\save[]+R+<1pt,-1pt>*{\vtxlabel{r}{}{2i-1}}\restore&&
	& && && &\vtx
		\save[]+R+<0pt,2pt>*{\vtxlabel{c}{2i-1}{2i-1}}\restore
		\ar@/^0pt/@{-}[]+0;[dl]+0\ar@/^0pt/@{-}[]+0;[dr]+0& \\
	&\vtx
		\save[]+R+<0pt,2pt>*{\vtxlabel{c}{2i}{2i}}\restore
		\ar@/^0pt/@{-}[]+0;[ld]+0\ar@/^0pt/@{-}[]+0;[rd]+0&&&
	\vtx\ar@/^0pt/@{-}[]+0;[rd]+0
		\save[]+UL+<5pt,2pt>*{\mathllap{{r}^{T}_{2i}}}\restore
		& & \vtx\ar@/^0pt/@{-}[]+0;[ld]+0
		\save[]+U+<0pt,2pt>*{\vtxlabel{r}{F}{2i}}\restore &
	& && && \vtx
		\save[]+R+<0pt,-2pt>*{\vtxlabel{T}{2i-1}{2i}}\restore
		\ar@/^0pt/@{-}[]+0;[d]+0&&
		\vtx
		\save[]+R+<0pt,0pt>*{\vtxlabel{F}{2i-1}{2i}}\restore
		\ar@/^0pt/@{-}[]+0;[d]+0 \\
	\vtx
		\save[]+R+<0pt,-2pt>*{\vtxlabel{T}{2i}{2i+1}}\restore
		\ar@/^0pt/@{-}[]+0;[d]+0&&
		\vtx
		\save[]+R+<0pt,0pt>*{\vtxlabel{F}{2i}{2i+1}}\restore
		\ar@/^0pt/@{-}[]+0;[d]+0&&
	& \vtx\ar@/^0pt/@{-}[]+0;[ld]+0\ar@/^0pt/@{-}[]+0;[rd]+0
		\save[]+R+<1pt,-1pt>*{\vtxlabel{r}{}{2i+1}}\restore&&
	& \vtx
		\save[]+R+<0pt,-6pt>*{\vtxlabel{a}{2i-1}{2i+1}}\restore
		\ar@/^0pt/@{-}[]+0;[d]+0\ar@/^0pt/@{-}[]+0;[urrrr]+0\ar@/^0pt/@{-}[]+0;[ullll]+0
	&&\vtx
		\save[]+R+<0pt,-6pt>*{\vtxlabel{b}{2i-1}{2i+1}}\restore
		\ar@/^0pt/@{-}[]+0;[d]+0\ar@/^0pt/@{-}[]+0;[urrrr]+0\ar@/^0pt/@{-}[]+0;[ullll]+0
	&&\vtx
		\save[]+R+<0pt,0pt>*{\vtxlabel{T}{2i-1}{2i+1}}\restore
		\ar@/^0pt/@{-}[]+0;[d]+0&&
		\vtx
		\save[]+R+<0pt,0pt>*{\vtxlabel{F}{2i-1}{2i+1}}\restore
		\ar@/^0pt/@{-}[]+0;[d]+0 \\
	\vtx
		\save[]+R+<0pt,0pt>*{\vtxlabel{T}{2i}{2i+2}}\restore
		\ar@/^0pt/@{-}[]+0;[d]+0&&
		\vtx
		\save[]+R+<0pt,0pt>*{\vtxlabel{F}{2i}{2i+2}}\restore
		\ar@/^0pt/@{-}[]+0;[d]+0&&
	\vtx\ar@/^0pt/@{-}[]+0;[rd]+0
		\save[]+U+<0pt,2pt>*{\mathllap{{r}^{T}_{2i+2}}}\restore
		& & \vtx\ar@/^0pt/@{-}[]+0;[ld]+0
		\save[]+U+<0pt,2pt>*{\vtxlabel{r}{F}{2i+2}}\restore &
	& \vtx
		\save[]+R+<0pt,0pt>*{\vtxlabel{a}{2i-1}{2i+2}}\restore
		\ar@/^0pt/@{-}[]+0;[d]+0&&
		\vtx
		\save[]+R+<0pt,0pt>*{\vtxlabel{b}{2i-1}{2i+2}}\restore
		\ar@/^0pt/@{-}[]+0;[d]+0 &&
	\vtx
		\save[]+R+<0pt,0pt>*{\vtxlabel{T}{2i-1}{2i+2}}\restore
		\ar@/^0pt/@{-}[]+0;[d]+0&&
		\vtx
		\save[]+R+<0pt,0pt>*{\vtxlabel{F}{2i-1}{2i+2}}\restore
		\ar@/^0pt/@{-}[]+0;[d]+0 \\
	\vtx
		\save[]+R+<0pt,0pt>*{\vtxlabel{T}{2i}{2i+3}}\restore
		\ar@{.}[]+0;[d]&&
		\vtx
		\save[]+R+<0pt,0pt>*{\vtxlabel{F}{2i}{2i+3}}\restore
		\ar@{.}[]+0;[d]&&
	& \vtx\ar@{.}[]+0;[ld]\ar@{.}[]+0;[rd]
		\save[]+R+<1pt,-1pt>*{\vtxlabel{r}{}{2i+3}}\restore&&
	& \vtx
		\save[]+R+<0pt,0pt>*{\vtxlabel{a}{2i-1}{2i+3}}\restore
		\ar@{.}[]+0;[d]&&
		\vtx
		\save[]+R+<0pt,0pt>*{\vtxlabel{b}{2i-1}{2i+3}}\restore
		\ar@{.}[]+0;[d] &&
	\vtx
		\save[]+R+<0pt,0pt>*{\vtxlabel{T}{2i-1}{2i+3}}\restore
		\ar@{.}[]+0;[d]&&
		\vtx
		\save[]+R+<0pt,0pt>*{\vtxlabel{F}{2i-1}{2i+3}}\restore
		\ar@{.}[]+0;[d] \\
	&\vdots&&&
	& \vdots & &
	& \bigstar&\vdots&\bigstar && &\vdots& 
}\]}
\caption{\hbox to 0pt{\label{fig-CRps}}
Seven levels of three tracks of~$G_\Phi$}
\end{figure}
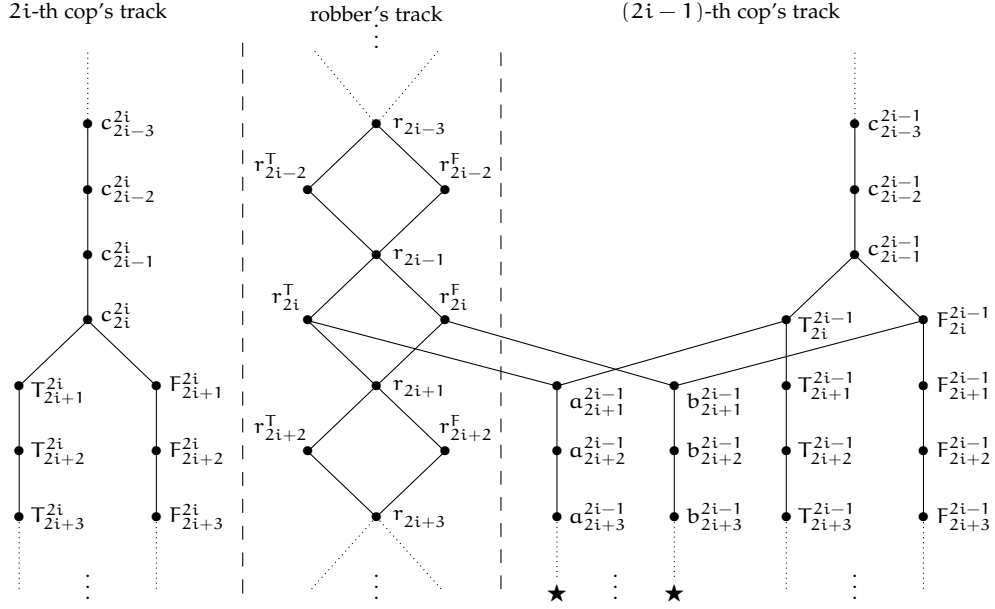
Let a quantified boolean formula be given
\[
\Phi = \forall v_1\exists v_2\dotsc\forall v_{2n-1}\exists
v_{2n}\phi(v_1\dotsc v_{2n})
\]
with $\phi(v_1\dotsc v_{2n})$ quantifier free and in conjunctive normal
form.
We will construct a graph~$G_\Phi$, and show an appropriate initial
position of cops and robber, such that Cops has a winning strategy for the
game~\CRps\ if and only if $\Phi$ is true.

To make the construction clear,
it is best to start by describing the~\textit{geography} of~$G_\Phi$, and
only after that to go into the details. First of all, the reader should
take a look at Figure~\ref{fig-CRps}, which depicts part of~$G_\Phi$.
The graph~$G_\Phi$ will be divided into two~\textit{stages}, each
constituted by several~\textit{levels}---vertices aligned
horizontally in Figure~\ref{fig-CRps} are on the same level.
Each level is
connected just to the level above and to the level below, with no edge
connecting a vertex of a level to another vertex of the same level.
We will place a total of~$2n+2$ cops on this graph.

The first
stage lets Robber and Cops alternatively choose the value of the
universally and existentially quantified variables respectively. These
boolean values will be stored in the positions of $2n$ cops.
This stage takes $2n$ levels, one for each of the variables.
The first stage will be divided into~\textit{tracks} as well---tracks are represented
vertically in Figure~\ref{fig-CRps}. Assuming correct play, we will have
the robber and $2n$ cops starting at level~$1$, each on a separate track.
Each of this tokens will move down one level per move, always remaining
on~\textit{its} track. After move~$2n$, all the tokens will be at
level~$2n+1$, which is the first level of the second stage. At this level
we will have two vertices for each of the variables, precisely one of
which occupied by a cop, and a single vertex for the robber.
The second stage will simply compute the value of~$\phi$. At the last
level of the second stage we have a safe heaven: a protected vertex connected to the
level above by a protected edge. If the robber reaches this vertex, it will
never be captured. If we except the heaven and the edges leading to it,
all other vertices and all other edges will be unprotected. Indeed, our
construction works as well in~\CR, replacing the safe heaven, for example, by a finite projective plane.

Now to the description of the first stage.
Figure~\ref{fig-CRps} represents, in particular, part of this stage, and the
reader is referred to it.
\myitem{} The robber's track is constituted by a single vertex~$r_{2i+1}$ for each
odd-numbered level, and two vertices~$r^T_{2i}$ and~$r^F_{2i}$ for each
even-numbered level. All these vertices are unprotected. For all $i$, the
vertices~$r^T_{2i}$ and~$r^F_{2i}$ are connected by unprotected edges to
the vertex~$r_{2i-1}$ above and to the vertex~$r_{2i+1}$ below.
The idea is that, while running down the track, the robber will determine the
value of universally quantified variables by deciding through which side of
each diamond to travel. Variable $v_{2i-1}$ is assigned at Robber's
$(2i-1)$-th move.
\myitem{} Cops meant to store the values of existentially quantified
variables run on a simple linear sequence of vertices, one per
level, each connected to the one on the level below.
Let $c^{2i}_j$ denote the vertex on level~$j$ of the track assigned
to variable~$v_{2i}$. After level~$2i$, the track of variable~$v_{2i}$ bifurcates
into two linear sequences of vertices denoted~$T^{2i}_j$
and~$F^{2i}_j$. The starting vertices~$T^{2i}_{2i+1}$ and~$F^{2i}_{2i+1}$
of these new sequences are both connected to $c^{2i}_{2i}$.
Again, all vertices and all edges are unprotected. The bifurcations allow
Cops to select the value of existentially quantified variables, precisely
$v_{2i}$ will be fixed at Cop's $2i$-th move.
\myitem{} Cops meant for universally quantified variables run on similar
tracks. In particular, the track assigned to~$v_{2i-1}$ bifurcates after
level~$2i-1$. However, the branch taken by the cop assigned to each of these
tracks at the bifurcation
must be determined by the position of the robber. To this aim, we connect
$T^{2j-1}_{2j}$ and~$r^T_{2j}$, through unprotected edges, to an unprotected
vertex~$a^{2j-1}_{2j+1}$ placed at level~$2j+1$, and in turn this vertex to
a safe heaven. This way, if, after
Robber's move~$2j-1$, the robber is in~$r^T_{2j}$, then Cops, who plays his
own $(2j-1)$-th move after Robber's one, is forced to move the cop which is
in~$c^{2i-1}_{2i-1}$ to~$T^{2j-1}_{2j}$, otherwise nothing would stop the
robber from reaching the safe heaven through~$a^{2j-1}_{2j+1}$. Because
of how we are going to use the graph~$G_\Phi$ in the proof of
Theorem~\ref{th-main}, we need to place the safe heaven at the end of a
linear sequence of vertices spanning all the levels of~$G_\Phi$: this
detail is immaterial for the proof at hand. Finally, we construct a
similar device for vertices~$F^{2j-1}_{2j}$ and~$r^F_{2j}$, connecting
them to~$b^{2j-1}_{2j+1}$.

The second stage begins at level~$2n+1$. At this level we have two
vertices~$T^i_{2n+1}$ and~$F^i_{2n+1}$ for each variable~$v_i$, plus one
vertex~$r_{2n+1}$, and a few more of the vertices denoted by~$a$ and~$b$. As
explained, the $a$ and~$b$ vertices, now, can be neglected. Assuming
correct play, just before move~$2n+1$, we have the robber in~$r_{2n+1}$
and precisely one cop in each pair of vertices~$T^i_{2n+1}$
and~$F^i_{2n+1}$: the presence of this cop in~$T^i_{2n+1}$ denotes truth
of~$v_i$, the presence of the cop in~$F^i_{2n+1}$ denotes falsity.
At level~$2n+2$ we place one unprotected vertex for each clause
of~$\phi$. The vertex associated to clause~$c$ will be connected by
an unprotected edge to~$T^i_{2n+1}$ whenever $v_i$ is in~$c$, and
to~$F^i_{2n+1}$ whenever $\neg v_i$ is in~$c$. All clauses are connected
by unprotected edges to~$r_{2n+1}$. Clearly, at move~$2n+1$, the robber can
be moved safely to one of the clauses' vertices only if that clause is
false. Finally we connect all clauses to a safe heaven, so that the
robber can reach it if and only if the formula is false.

To complete the construction of~$G_\Phi$, we attach a single unprotected
vertex~$c_0$, by an unprotected edge, to $r_1$. That vertex is on
level~$0$, and we place two cops on it at the start of the game.

Now we prove that if $\Phi$ is true, then Cops has a winning strategy.
First of all, observe that Cops can force the robber to go down its track,
one level per move, until it reaches the second stage. To do so, he will
use the two cops initially placed in~$c_0$ to completely occupy the
robber's track precisely one level behind the robber, so that Robber has
to move his token one level down each move in order to avoid capture. The
only way out of the track is through vertices named with~$a$ and~$b$, so
Cops will act as described above in order to block this escape. By
following this strategy, Cops lets Robber choose the value of universally
quantified variables at odd-numbered moves, and he can choose the value of
existentially quantified variables at even-numbered moves. Observe that
the order of choices coincides with the order of the quantifiers. Since
$\Phi$ is true, Cops can make his choices so that $\phi(v_1\dotsc
v_{2n})$ is true. Hence the robber will be captured as soon as
it enters one of the clauses.

Conversely, we prove that if $\Phi$ is false, then Robber has a winning
strategy. Our strategy will move the robber down one level per move,
hence no matter how they move, the two cops initially placed in~$c_0$ will
never be able to capture it. By the same reason, no cop moving backwards,
or standing for one move,
can capture the robber, so we can assume that all cops move down one level
per move, and, in particular, none of them can leave its track. By the
usual reason, Robber can force the value of odd-numbered variables. Now,
by our assumption, Cops has chosen values to even-numbered variables at
proper times. However, since $\Phi$ is false, Robber can make his choices so
that $\phi(v_1\dotsc v_{2n})$ will be false, hence he will be able to move
the robber to an unthreatened clause, whence it will reach safety.
\end{proof}

\section{Proof of the main theorem}
\label{sec-proof}

In this section we will prove Theorem~\ref{th-main}. Our technique is,
again, by reduction of~QBF to~\CRp.
In particular, we will connect a few copies of the graph~$G_\Phi$
constructed in the previous section to two~\textit{reset mechanisms}---a
portion of one of which is shown in Figure~\ref{fig-reset}. The function
of the reset mechanism is to substitute the safe heaven at the end
of~$G_\Phi$ \textit{and} to allow either player to force the initial
position. As we will see, the mechanisms have been devised so that
the robber can safely inhabit either of them unless all the cops
are employed to chase it through a very specific maneuver,
and doing that the initial position of the proof of Lemma~\ref{th-CRps} is
attained. If $\Phi$ happens to be false, then Robber can move his token
safely trough~$G_\Phi$ to the other reset mechanism.
On the other hand, we will see that Cops as well has means to force Robber to
get into the starting position. So Cops will have a winning strategy in
our instance of~\CRp\ if and only if he has a strategy for~\CRps\ on the
graph~$G_\Phi$ if and only if $\Phi$ is true.

The argument that follows and the proof of Lemma~\ref{th-CRps} are
actually one single proof. We decided to separate a substantial portion of
it into Lemma~\ref{th-CRps} in order to give a more orderly exposition.
Nevertheless the reader will not understand the rest of this section
unless he diligently went trough Section~\ref{sec-CRps} before.

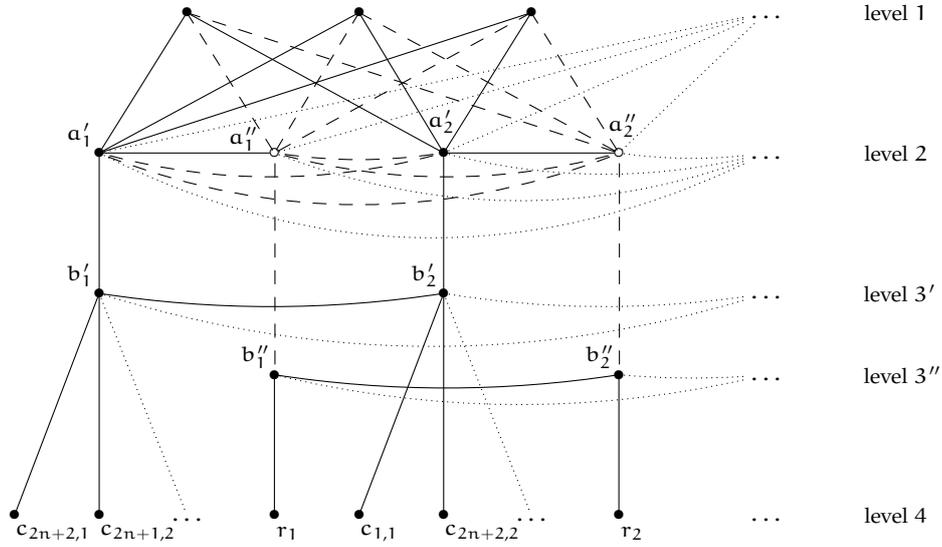
\begin{figure}
\footnotesize{
\[\xymatrix@=17pt{
	&& \vtx\ar@{-}@/^0pt/[]+0;[ddl]+0
		\ar@{--}[]+0;[ddr]
		\ar@{-}@/^0pt/[]+0;[ddrrr]+0
		\ar@{--}[]+0;[ddrrrrr]
	&& \vtx\ar@{-}@/^0pt/[]+0;[ddlll]+0
		\ar@{--}[]+0;[ddl]
		\ar@{-}@/^0pt/[]+0;[ddr]+0
		\ar@{--}[]+0;[ddrrr]
	&& \vtx\ar@{-}@/^0pt/[]+0;[ddlllll]+0
		\ar@{--}[]+0;[ddlll]
		\ar@{-}@/^0pt/[]+0;[ddl]+0
		\ar@{--}[]+0;[ddr]
	&&& \dotso
		\ar@{.}[];[ddllllllll]+0
		\ar@{.}[];[ddllllll]
		\ar@{.}[];[ddllll]+0
		\ar@{.}[];[ddll]
	&*[r]{\text{level $1$}}
\\
\\
	&\vtx\ar@/^0pt/@{-}[]+0;[rr]
		\save[]+UL+<0pt,0pt>*{a'_1}\restore
		\ar@{--}@/_8pt/[]+<4pt,-1pt>;[rrrr]+<-4pt,-1pt>
		\ar@{--}@/_18pt/[]+<3.5pt,-1.5pt>;[rrrrrr]+<-3.5pt,-1.5pt>
		\ar@{.}@/_32pt/[]+0;[rrrrrrrr]
	&&\pvtx
		\save[]+UL+<-10pt,4pt>*{a''_1}\restore
		\ar@{--}@/_2pt/[]+<4pt,-.5pt>;[rr]+<-4pt,-.5pt>
		\ar@{--}@/_8pt/[]+<4pt,-1pt>;[rrrr]+<-4pt,-1pt>
		\ar@{.}@/_18pt/[];[rrrrrr]
	&&\vtx\ar@/^0pt/@{-}[]+0;[rr]
		\save[]+U+<-1pt,5pt>*{a'_2}\restore
		\ar@{.}@/_8pt/[]+0;[rrrr]
	&&\pvtx
		\save[]+UL+<3pt,9pt>*{a''_2}\restore
		\ar@{.}@/_2pt/[];[rr]
	&&\dotso
	&*[r]{\text{level $2$}}
\\
\\
	&\vtx\ar@/^0pt/@{-}[]+0;[uu]+0
		\save[]+UL+<0pt,0pt>*{b'_1}\restore
		\ar@{-}@/_5pt/[]+0;[rrrr]+0
		\ar@{.}@/_20pt/[]+0;[rrrrrrrr]
	&&&&\vtx\ar@/^0pt/@{-}[]+0;[uu]+0
		\save[]+UL+<0pt,0pt>*{b'_2}\restore
		\ar@{.}@/_5pt/[]+0;[rrrr]
	&&&& \dotso
	&*[r]{\text{level $3'$}}
\\
	&&&\vtx\ar@/^0pt/@{--}[]+<.2pt,4pt>;[uuu]+<.2pt,-4pt>
		\save[]+UL+<0pt,0pt>*{b''_1}\restore
		\ar@{-}@/_5pt/[]+0;[rrrr]+0
		\ar@{.}@/_11.25pt/[]+0;[rrrrrr]
	&&&&\vtx\ar@/^0pt/@{--}[]+<.2pt,4pt>;[uuu]+<.2pt,-4pt>
		\save[]+UL+<0pt,0pt>*{b''_2}\restore
		\ar@{.}@/_1.25pt/[]+0;[rr]
	&& \dotso
	&*[r]{\text{level $3''$}}
\\
\\
	\vtx\ar@/^0pt/@{-}[]+0;[uuur]+0
		\save[]+D+<3pt,0pt>*{c_{\mathrlap{2n+2,1}}}\restore
	&\vtx\ar@/^0pt/@{-}[]+0;[uuu]+0
		\save[]+D+<3pt,0pt>*{c_{\mathrlap{2n+1,2}}}\restore
	&\dotso\ar@{.}[];[uuul]+0
	&\vtx\ar@/^0pt/@{-}[]+0;[uu]+0
		\save[]+D+<3pt,0pt>*{r_{\mathrlap{1}}}\restore
	&\vtx\ar@/^0pt/@{-}[]+0;[uuur]+0
		\save[]+D+<3pt,0pt>*{c_{\mathrlap{1,1}}}\restore
	&\vtx\ar@/^0pt/@{-}[]+0;[uuu]+0
		\save[]+D+<3pt,0pt>*{c_{\mathrlap{2n+2,2}}}\restore
	&\dotso\ar@{.}[];[uuul]+0
	&\vtx\ar@/^0pt/@{-}[]+0;[uu]+0
		\save[]+D+<3pt,0pt>*{r_{\mathrlap{2}}}\restore
	&&\dotso
	&*[r]{\text{level $4$}}
}\]}
\caption{\hbox to 0pt{\label{fig-reset}}
A portion of a reset mechanism}
\end{figure}

Let us assume that we have formula
\[
\Phi = \forall v_1\exists v_2\dotsc\forall v_{2n-1}\exists
v_{2n}\phi(v_1\dotsc v_{2n})
\]
as in the proof of
Lemma~\ref{th-CRps}, we are going to construct a labelled graph~$G$ such
that Cops has a winning strategy for $2n+2$ cops if and only if $\Phi$ is
true. We will use $4n+4$ slightly modified copies of the graph~$G_\Phi$ connected to two
reset mechanisms. It is convenient to give an overview of how these
components fit together to make the graph~$G$.
The copies of~$G_\Phi$ are modified by removing the vertex~$c_0$ and all
the safe heavens. So the top level of each copy of~$G_\Phi$ is constituted
by the starting vertices of the robber and all the cops, and the bottom
level is constituted by the vertices representing the clauses of~$\phi$ and
vertices~$a^{\textrm{odd}}_{2n+2}$ and~$b^{\textrm{odd}}_{2n+2}$ with odd
upper index (\textit{i.e.}\ those ends of the cops' tracks that went
directly into the removed safe heavens).
For clarity's sake, the reset mechanisms are divided
into levels as well---as usual, vertices aligned horizontally in
Figure~\ref{fig-reset} are on the same level.
Broadly speaking, we will arrange $2n+2$ copies of~$G_\Phi$ so that
their top levels coincide with the bottom level of one of the reset
mechanisms; the top level of the other reset mechanism will coincide with
the bottom levels of these copies of~$G_\Phi$; then the top level of the
other $2n+2$ copies of~$G_\Phi$ will coincide with the bottom level of
this second reset mechanism; and, to close the circle, the bottom level
of the last $2n+2$ copies of~$G_\Phi$ is going to coincide with the top
level of the first reset mechanism.

More precisely, fix $2n+2$ of our modified copies of~$G_\Phi$. One reset
mechanism is constructed as follows.
\myitem{} On level~$1$, we place all the vertices belonging to the bottom
levels of the $2n+2$ copies of~$G_\Phi$, these vertices include those
representing the clauses of~$\phi$.
\myitem{} On level~$2$, we place $2n+2$ unprotected vertices~$a'_1\dotsc
a'_{2n+2}$ and $2n+2$ protected vertices~$a''_1\dotsc a''_{2n+2}$. For
each~$i=1\dotsc 2n+2$ we put an unprotected edge between~$a'_i$
and~$a''_i$, unprotected edges between~$a'_i$ and all vertices of
level~$1$, and protected edges between~$a''_i$ and all vertices of
level~$1$. Finally, all the pair of vertices of level~$2$ that are not
already joined by an edge, are connected using an unprotected edge, so
that level~$2$ forms a complete graph.
\myitem{} On level~$3'$ we put vertices~$b'_1\dotsc b'_{2n+2}$. Each
$b'_i$ is connected by an unprotected edge to the corresponding~$a'_i$.
All vertices of level~$3'$ are connected together in a complete graph of
unprotected edges.
\myitem{} On level~$3''$ we put vertices~$b''_1\dotsc b''_{2n+2}$. Each
$b''_i$ is connected by a protected edge to the corresponding~$a''_i$.
All vertices of level~$3''$ are connected together in a complete graph of
unprotected edges.
\myitem{} Finally, level~$4$ is constituted by the top levels of the other
$2n+2$ copies of~$G_\Phi$ arranged as follows.  Let~$G_{\Phi,1}\dotsc
G_{\Phi,2n+2}$ be our copies of~$G_\Phi$. Call $r_i$ the starting vertex
for the robber in~$G_{\Phi,i}$ and call~$c_{i,1}\dotsc c_{i,2n}$ its cops'
staring vertices.  For all~$i$, we connect $r_i$ to~$b''_i$ with an
unprotected edge, and $c_{i,1}\dotsc c_{i,2n}$ to~$b'_{i+1}\dotsc
b'_{2n+2}b'_1\dotsc b'_{i-2}$, again with unprotected edges.\\
The second reset mechanism is constructed symmetrically.

We will need to remember that,
by construction, $G$ is divided into $4n+8$ levels. In fact, starting, say, from
level~$1$ of one reset mechanism, we can count level~$2$ of the same mechanism,
then level~$3$, which is the union of levels~$3'$ and~$3''$, then
level~$4$ of the reset mechanism, which is level~$1$ of the copies
of~$G_\Phi$. After this we have the~$2n+2$ levels of the copies
of~$G_\Phi$, \textit{i.e.}\ one level for all the levels~$1$ of them, one
for all the levels~$2$, and so on. At level~$2n+1$ we have enumerated
precisely $2n+4$ different levels, the next level is level~$1$ of the
opposite reset mechanism, and we can define the remaining $2n+4$ levels
symmetrically. The important observation is that a vertex in one level can be connected only to
vertices of the same level or vertices of one of the two neighbouring
levels in the (circular) order of our enumeration.
It is here that we need the linear sequences of vertices indicated by letters~$a$ and~$b$ in
Figure~\ref{fig-CRps} to be no longer nor shorter than the~\textit{normal}
path to safety.

Now we show a winning strategy for Robber assuming that $\Phi$ is false.
Observe that each cop can pose a threat on at most one of the
vertices~$a''_i$, so, after Cops has placed his tokens, there is at least
one unthreatened~$a''_i$ in one of the reset mechanisms. The robber should
be placed in that vertex. Now, let's focus on the reset mechanism in
which the robber has been placed. Until all the cops occupy precisely all
the vertices~$a'_i$ of that reset mechanism, the robber will simply stay
on an unthreatened~$a''_i$---it can be moved between them through the complete
graph of protected edges. When all the cops occupy all the
vertices~$a'_i$, Robber will have his token in, say, $a''_k$. Now he will
send it to~$b''_k$, whence it will enter the $k$-th copy of~$G_\Phi$
connected to that reset mechanism, and arguably emerge unscathed from it
in the opposite reset mechanism. To prove that Robber's plan actually
works observe the following facts.
\myitem{} No cop can capture the robber while it moves from~$a''_k$
to~$b''_k$ to~$r_k$.
\myitem{}
The levels of $G$ are arranged in a circle, where each level is connected
precisely to two neighbouring ones.
The robber starts its journey from
level~$2$ of one of the reset mechanisms, and it travels at the speed of
one level per turn heading for the same level
in the other reset mechanism, which is precisely half way around the
circle. Since Robber moves first, no cop will arrive there before the
robber does.
\myitem{} By the time the robber is in~$r_k$, precisely $2n$ cops can reach
their $2n$ starting vertices in the copy of~$G_\Phi$ that the robber
just entered. The remaining cops, from now on, can be neglected, because
they will not enter our copy~$G_\Phi$ in time.
Hence, following the strategy detailed in the proof of
Lemma~\ref{th-CRps}, the robber will reach one of the clauses of~$\phi$,
escape capture there having chosen the values of the variables properly,
and finally move to level~$2$ of the new reset mechanism before any cop can
be there.

Remains to be proven that if $\Phi$ is true, then Cops has a winning
strategy. In the following, we will assume that $\phi$ has at least $8$
variables (we need to have at least $9$ cops around), and that $\phi$ has
at least one non-empty clause (we need $G$ to be connected). Clearly, this
goes without loss of generality. The intermediate goal of Cops is to reach
the following position, with the Robber about to move:
\myitem{} robber in some~$a''_i$ of one reset mechanism,
\myitem{} $2n+1$ of the cops in vertices~$a'_j$ of the same
reset mechanism, with~$j\neq i$,
\myitem{} and the remaining cop either in~$a'_i$ or in~$a''_i$.\\
To this aim, he places initially three cops in vertices~$b'_1$, $b''_2$,
and~$a'_3$ of each reset mechanism ($6$~cops total), we don't care where
the remaining cops are placed.  The cops at~$a'_3$ make a barrier at
level~$1$ of their respective reset mechanisms, and the cops at~$b'_1$
and~$b''_2$ make a barrier at level~$3$.  Hence, as long as these $6$ cops
stay in place, the graph is effectively divided, from Robber's point of
view, into $4n+6$ disjoint components: the $4n+4$ copies of~$G_\Phi$ and
the two reset mechanisms. We claim that if Robber places his token in one
of the copies of~$G_\Phi$, then Cops can win using three additional cops
(over the $6$ above). The strategy is as follows. Two cops reach
level~$1$ of the robber's track
(meaning the vertices labelled with~$r$,
\textit{not} the track where the robber is currently located)
in the copy of~$G_\Phi$ inhabited by the robber, they
can do that since $G$~is connected, then they move downwards until they
reach the same level as the robber (with the robber about to move). From
this moment on, these two cops will be kept at the same level of the
robber, \textit{i.e.}\ they will be moved up or down whenever the robber
is moved up or down, also they will be placed so that they occupy their
level of the robber's track completely. Hence, form this moment on, the
robber can not access the robber's track, which implies that it must be on
a cop's track and it can not move from this cop's track to another. Since
all cop's track are trees, a single additional cop is sufficient to
capture the robber. Therefore we know that Robber must place his token in one
of the reset mechanisms.

Now we assume that Robber placed the robber in one of the reset
mechanisms.
We will now explain how Cops can attain his intermediate goal.
All the action described in this paragraph takes place inside the reset
mechanism chosen by the robber. Cops has three tokens in vertices~$b'_1$,
$b''_2$, and~$a'_3$. He now moves all the other tokens to vertices
$a'_4\dotsc a'_{2n+2}$. While doing that, the cops in~$b'_1$, $b''_2$,
and~$a'_3$
are not moved (unless the robber tries to escape), so at the end of the
maneuver the robber must be either in~$a''_1$, in~$a'_2$, or in~$a''_2$: the only
unthreatened vertices of the reset mechanism. Now Cops can reach his goal
by one of the following sequences of moves, which are forced for Robber:
\myitem{} If the robber is in~$a''_1$:
cop from~$b'_1$ to~$a'_1$ --
robber in~$a'_2$ or~$a''_2$ --
if the robber is in~$a''_2$: cop from~$b''_2$ to~$a''_2$ and we are
done---otherwise the robber is in~$a'_2$: cop from~$a'_1$ to~$b'_1$ and cop from~$b''_2$ to~$a''_2$ --
robber either in~$a''_1$ or~$a''_2$ --
cop from~$a''_2$ to~$a'_2$ and cop from~$b'_1$ to~$a'_1$.
\myitem{} If the robber is in~$a'_2$:
cop form~$b''_2$ to~$a''_2$ --
robber either in~$a''_1$ or~$a''_2$ --
cop from~$a''_2$ to~$a'_2$ and cop form~$b'_1$ to~$a'_1$.
\myitem{} If the robber is in~$a''_2$:
cop from~$b'_1$ to~$a'_1$ and cop from~$b''_2$ to~$a''_2$.

Finally, the robber is in some~$a''_j$, and all the vertices~$a'_i$ are
occupied by cops except at most~$a'_j$, in which case its cop must be
in~$a''_j$.  Without loss of generality we can assume~$j=2$.  The only
option for Robber is to move his token down to~$b''_2$. Now the
following happens:
\myitem{} all the cops in~$a'_3\dotsc a'_{2n+2}$ are moved down
to~$b'_3\dotsc b'_{2n+2}$,
\myitem{} if $a'_2$ is occupied by a cop, then that cop is not moved,
otherwise there is a cop in~$a''_2$, and that cop is moved to~$a'_2$,
\myitem{} the cop in~$a'_1$ is moved to~$a''_1$.\\
At this point there are two cases, either the robber moves to~$r_2$ or it
moves to one of the vertices~$b''_i$ (here including the case if it stays
in~$b''_2$).
\myitem{} In the first case, the cops in~$b'_3\dotsc b'_{2n+2}$ must be
moved to~$c_{2,1}\dotsc c_{2,2n}$, the cop in~$a'_2$ goes to~$a''_2$, and
the cop in~$a''_1$ goes to~$b''_1$. This way the robber can not be moved
back to~$b''_2$, and it can either be moved into the robber's track of the
second (because it is in~$r_2$) copy of~$G_\Phi$, or be left in~$r_2$. In
either case the cops in~$a''_2$ and~$b''_1$ will be moved to~$b''_2$ at
the next move, forcing it downwards. These two cops have the same function
of the cops in vertex~$c_0$ of the proof of Lemma~\ref{th-CRps}, although
they may be lagging two levels behind the robber instead of one. The
reader can check that the argument of that proof applies from now on to
the second copy of~$G_\Phi$. Hence the robber can not emerge form that
copy of~$G_\Phi$ uncaptured.
\myitem{} In the second case, let's assume that Robber moved his token
to~$b''_j$. Then Cops moves the cops from~$b'_3\dotsc b'_{2n+2}$
to~$b'_{j+1}\dotsc b'_{2n+2}b'_1\dotsc b'_{j-2}$, the cop in~$a'_2$ is
sent to a~$a'_j$, and the cop in~$a''_1$ is sent to~$b''_1$. Now the
robber is forced to~$r_j$ and the strategy of the first case applies.

To conclude the proof, suffices to verify that our graph~$G$ can be constructed
in~{\rm LOGSPACE}, which is standard. \qed

\vfill\pagebreak
\catcode`$=3


\newcommand{\etalchar}[1]{$^{#1}$}
\providecommand{\bysame}{\leavevmode\hbox to3em{\hrulefill}\thinspace}

\def\MR#1{\MRhref#1 \stop{MR #1}}
\def\MRhref#1 #2\stop#3{\href{http://www.ams.org/mathscinet-getitem?mr=#1}{#3}}
\def\url#1{\href{#1}{\tt #1}}

\end{document}